\documentclass[submission,copyright,creativecommons]{eptcs}
 % Name of the event you are submitting to
\usepackage{breakurl}             % Not needed if you use pdflatex only.

\usepackage{amsmath}
\usepackage{amssymb}
\usepackage{url}
\usepackage{amsthm}
\usepackage{graphicx}
\usepackage{amscd}

\usepackage{tikz} % added by SLR
\usepackage[boxed,linesnumbered,longend]{algorithm2e} % added by SLR
\usetikzlibrary{positioning}
\usetikzlibrary{shapes.geometric}

\usetikzlibrary{automata}
\usepackage{subcaption}

\newcommand{\name}[1]{\textsc{#1}}
\newcommand{\hide}[1]{}

\theoremstyle{definition}
\newtheorem{theorem}{Theorem}
\newtheorem{definition}[theorem]{Definition}

\newtheorem{corollary}[theorem]{Corollary}

\newtheorem{lemma}[theorem]{Lemma}
\newtheorem{observation}[theorem]{Observation}

\newtheorem{example}[theorem]{Example}
\newtheorem*{lemma*}{Lemma}
\newtheorem*{observation*}{Observation}

\title{Extending Finite Memory Determinacy to Multiplayer Games}
\author{St\'ephane Le Roux
\institute{D\'epartement d'Informatique\\ Universit\'e Libre de Bruxelles, Belgique}
\email{Stephane.Le.Roux@ulb.ac.be}
\and
Arno Pauly
\institute{ D\'epartement d'Informatique\\ Universit\'e Libre de Bruxelles, Belgium
\email{Arno.Pauly@cl.cam.ac.uk}}
}

\begin{document}
\maketitle

\begin{abstract}
We show that under some general conditions the finite memory determinacy of a class of two-player win/lose games played on finite graphs implies the existence of a Nash equilibrium built from finite memory strategies for the corresponding class of multi-player multi-outcome games. This generalizes a previous result by  Brihaye, De Pril and Schewe. For most of our conditions we provide counterexamples showing that they cannot be dispensed with.

Our proofs are generally constructive, that is, provide upper bounds for the memory required, as well as algorithms to compute the relevant winning strategies.
\end{abstract}

\section{Introduction}
The usual model employed for synthesis are sequential two-player win/lose games played on finite graphs. The vertices of the graph correspond to states of a system, and the two players jointly generate an infinite path through the graph (the \emph{run}). One player, the protagonist, models the aspects of the system under the control of the designer. In particular, the protagonist will win the game iff the run satisfies the intended specification. The other player is assumed to be fully antagonistic, thus wins iff the protagonist loses. One then would like to find winning strategies of the protagonist, that is, a strategy for her to play the game in such a way that she will win regardless of the antagonist's moves. Particularly desirable winning strategies are those which can be executed by a finite automaton.

Classes of games are distinguished by the way the winning conditions (or more generally, preferences of the players) are specified. Typical examples include:
\begin{itemize}
\item Muller conditions, where only the set of vertices visited infinitely many times matters;
\item Parity conditions, where each vertex has a priority, and the winner is decided by the parity of the least priority visited infinitely many times;
\item Energy conditions, where each vertex has an energy delta (positive or negative), and the protagonist loses if the cumulative energy values ever drop below 0;
\item Discounted payoff conditions, where each vertex has a payoff value, and the outcome is determined by the discounted sum of all payoffs visited with some discount factor $0 < \lambda < 1$;
\item Combinations of these, such as energy parity games, where the protagonist has to simultaneously ensure that the least parity visited infinitely many times is odd and that the cumulative energy value is never negative.
\end{itemize}

Our goal is to dispose of two restrictions of this setting: First, we would like to consider any number of players; and second allow them to have far more complicated preferences than just preferring winning over losing. The former generalization is crucial in a distributed setting (also e.g.~\cite{depril2,bulling}): If different designers control different parts of the system, they may have different specifications they would like to enforce, which may be partially but not entirely overlapping. The latter seems desirable in a broad range of contexts. Indeed, rarely is the intention for the behaviour of a system formulated entirely in black and white: We prefer a programm just crashing to accidently erasing our hard-drive; we prefer a programm to complete its task in 1 minute to it taking 5 minutes, etc. We point to \cite{kupferman} for a recent survey on such notions of quality in synthesis.

Rather than achieving this goal by revisiting each individual type of game and proving the desired results directly (e.g.~by generalizing the original proofs of the existence of winning strategies), we shall provide a transfer theorem: In Theorem \ref{theo:maingraph}, we will show that (under some conditions), if the two-player win/lose version of a game is finite memory determined, the corresponding multi-player multi-outcome games all have finite memory Nash equilibria.

This result is more general than a similar one obtained by \name{Brihaye}, \name{De Pril} and \name{Schewe} \cite{depril2},\cite[Theorem 4.4.14]{depril}. A particular class of games covered by our result but not the previous one are (a variant of) energy parity games as introduced by \name{Chatterjee} and \name{Doyen} \cite{chatterjee4}. The high-level proof idea follows earlier work by the authors on equilibria in infinite sequential games, using Borel determinacy as a blackbox \cite{paulyleroux2}\footnote{Precursor ideas are also present in \cite{leroux3} and \cite{mertens} (the specific result in the latter was joint work with Neymann).} -- unlike the constructions there (cf.~\cite{paulyleroux3-cie}), the present ones however are constructive and thus give rise to algorithms computing the equilibria in the multi-player multi-outcome games given suitable winning strategies in the two-player win/lose versions.

Echoing \name{De Pril} in \cite{depril}, we would like to stress that our conditions apply to the preferences of each player individually. For example, some players could pursue energy parity conditions, whereas others have preferences based on Muller conditions: Our results apply just as they would do if all players had preferences of the same type.

For several of the conditions in our main theorem we also provide examples showing that they cannot be removed.

\section{Background}
\label{sec:background}

A two-player win/lose game played on a finite graph is specified by a directed graph $(V,E)$ where every vertex has an outgoing edge, a starting vertex $v_0\in V$, two sets  $V_1 \subseteq V$ and $V_2 := V \setminus V_1$, and a \emph{winning condition} $W\subseteq V^\omega$. Starting from  $v_0$, the players move a token along the graph, $\omega$ times, with player $a\in\{1,2\}$ picking and following an outgoing edge whenever the current vertex lies in $V_a$. Player $1$ wins iff the infinite sequence of visited vertices is in $W$.

For $a\in\{1,2\}$ let $\mathcal{H}_a$ be the set of finite paths in $(V,E)$ starting at $v_0$ and ending in $V_a$. Let $\mathcal{H} := \mathcal{H}_1 \cup \mathcal{H}_2$ be the possible \emph{finite histories} of the game, and let $[\mathcal{H}]$ be the infinite ones. For clarity we may write $[\mathcal{H}_g]$ instead of $[\mathcal{H}]$ for a game $g$. A \emph{strategy} of player $a\in\{1,2\}$ is a function of type $\mathcal{H}_a \to V$ such that $(v,s(hv)) \in E$ for all $hv\in \mathcal{H}_a$. A pair of strategies $(s_1,s_2)$ for the two players induces a run $\rho \in V^\omega$: Let $s := s_1 \cup s_2$ and set $\rho (0) := v_0$ and $\rho (n+1) := s(\rho(0)\rho(1)\ldots \rho(n))$. For all strategies $s_a$ of player $a$ let $\mathcal{H}(s_a)$ be the finite histories in $\mathcal{H}$ that are compatible with $s_a$, and let $[\mathcal{H}(s_a)]$ be the infinite ones. A strategy $s_a$ is said to be winning if $[\mathcal{H}(s_a)] \subseteq W$, \textit{i.e.} $a$ wins regardless of her opponent's moves.

A \emph{strategic implementation} for player $a$ using $m$ bits of memory is a function $\sigma: V \times \{0,1\}^m \to V\times \{0,1\}^m$ that describes the two simultaneous updates of player $a$ upon arrival at a vertex $v$ if its memory content was $M$ just before arrival: $(v,M)\mapsto \pi_2\circ\sigma(v,M)$ describes the memory update and $(v,M)\mapsto \pi_1\circ\sigma(v,M)$ the choice for the next vertex. This choice will be ultimately relevant only if $v\in V_a$, in which case we require that $(v,\pi_1\circ\sigma(v,M)) \in E$.

Together with an initial memory content $M_\epsilon\in \{0,1\}^m$, a strategic implementation provides a finite memory strategy. The memory content after some history is defined by induction: $M_{\sigma}(M_\epsilon,\epsilon):=M_\epsilon$ and $M_{\sigma}(M_\epsilon,hv) := \pi_2 \circ \sigma(v,M_{\sigma}(M_\epsilon,h))$ for all $hv\in \mathcal{H}$. The \emph{finite-memory strategy} $s_a$ induced by the strategic implementation $\sigma$ together with initial memory content $M_\epsilon$ is defined by $s_a(hv):= \pi_1\circ\sigma(v,M_{\sigma}(M_\epsilon,h))$ for all $hv\in \mathcal{H}_a$. If not stated otherwise, we will assume the initial memory to be $0^m$.

A (general) game played on a finite graph is specified by a directed graph $(V,E)$, a set of agents $A$, a cover $\{V_a\}_{a \in A}$ of $V$ via pairwise disjoint sets, the starting vertex $v_0$, and for each player $a$ a preference relation $\prec_a \subseteq [\mathcal{H}] \times [\mathcal{H}]$. The notions of strategies and induced runs generalize in the obvious way. In particular, instead of a pair of strategies (one per player), we consider families $(s_a)_{a \in A}$, which are called strategy profiles.

The concept of a winning strategy no longer applies though. Instead, we use the more general notion of a Nash equilibrium: A family of strategies $(s_a)_{a \in A}$ is a Nash equilibrium, if there is no player $a_0 \in A$ and alternate strategy $s'_{a_0}$ such that $a$ would prefer the run induced by $(s_a)_{a \in A \setminus \{a_0\}} \cup (s'_{a})_{a \in \{a_0\}}$ to the run induced by $(s_a)_{a \in A}$. Intuitively, no player can gain by unilaterally deviating from a Nash equilibrium. Note that the Nash equilibria in two-player win/lose games are precisely those pairs of strategy where one strategy is a winning strategy.

The transfer of results from the two-player win/lose case to the general case relies on the idea that each general game induces a collection of two-player win/lose games, namely the threshold games of the future games, as below.

\begin{definition}[Future game and one-vs-all threshold game] \hfill

Let $g = \langle (V,E), v_0, A, \{V_a\}_{a \in A}, (\prec_a)_{a \in A}\rangle$ be a game played on a finite graph.
\begin{itemize}
\item Let $a_0\in A$ and $\rho \in [\mathcal{H}]$, the one-vs-all threshold game $g_{a_0,\rho}$ for $a_0$ and $\rho$ is the win-lose two-player game played on $(V,E)$, starting at $v_0$, with vertex subsets  $V_{a_0}$ and $\bigcup_{a \in A \setminus \{a_0\}} V_a$, and with winning set $\{\rho'\in [\mathcal{H}]\,\mid\,\rho \prec_{a_0} \rho'\}$ for Player $1$.

\item Let $v\in V$. For paths $hv$ and $vh'$ in $(V,E)$ let $hv\hat{\,}vh' := hvh'$.

\item For all $h\in \mathcal{H}$ with last vertex $v$ let $g^{h} := \langle (V,E), v, A, \{V_a\}_{a \in A}, (\prec^{h}_a)_{a \in A}\rangle$ be called the future game of $g$ after $h$, where for all $\rho, \rho' \in [\mathcal{H}_{g^h}]$ we set $\rho \prec^{h}_a \rho'$ iff $h\hat{\,}\rho \prec_a h\hat{\,}\rho'$. If $s$ is a strategy in $g$, let $s^{h}$ be the strategy in $g^{h}$ such that $s^{h}(h') := s(h\hat{\,}h')$ for all $h'\in \mathcal{H}_{g^h}$.
\end{itemize}
\end{definition}

\begin{observation}\label{obs:strat-derived-game}
Let $g = \langle (V,E), v_0, A, \{V_a\}_{a \in A}, (\prec_a)_{a \in A}\rangle$ be a game played on a finite graph.
\begin{enumerate}
\item\label{obs:strat-derived-game1} $g$ and its thresholds games have the same strategies.
\item\label{obs:strat-derived-game2} for all $h,h'\in \mathcal{H}$ ending with the same vertex the games $g^{h}$ and $g^{h'}$ have the same (finite-memory)
strategies.
\item\label{obs:strat-derived-game3} $g$, its future games, and their thresholds games have the same strategic implementations.
\item\label{obs:strat-derived-game4} If a strategy $s_a$ in $g$ is finite-memory, for all $h\in \mathcal{H}$ the strategy $s_a^{h}$ is also finite-memory.
\end{enumerate}
\begin{proof}
We only prove the fourth claim. Since $s_a$ is a finite-memory strategy, it comes from some strategic implementation $\sigma$ with initial memory $M_\epsilon$. We argue that $\sigma$ with initial memory $M_{\sigma}(M_\epsilon,h))$ implements $s_a^{hv}$: First, $s_a^{hv}(v) = s_a(hv) = \pi_1\circ\sigma(v,M_{\sigma}(M_\epsilon,h)) = \pi_1\circ\sigma(v,M_{\sigma}(M_{\sigma}(M_\epsilon,h),\epsilon))$; second, for all $h'v'\in \mathcal{H}^{hv}$ we have $s_a^{hv}(vh'v') = s_a(hvh'v') = \pi_1\circ\sigma(v',M_{\sigma}(M_\epsilon,hvh')) = \pi_1\circ\sigma(v',M_{\sigma}(M_{\sigma}(M_\epsilon,h),vh'))$.
\end{proof}

\end{observation}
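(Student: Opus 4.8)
The plan is to treat claims \ref{obs:strat-derived-game1}--\ref{obs:strat-derived-game3} as bookkeeping about which ingredients of a game the relevant objects actually depend on, and to reserve the single genuine computation for claim \ref{obs:strat-derived-game4}. Concretely, a strategy of a player is a map from a set $\mathcal{H}_a$ of finite histories into $V$ respecting $E$, and both $\mathcal{H}_a$ and the edge constraint are functions of $(V,E)$, the start vertex and the cover $\{V_a\}$ only; a strategic implementation is a map $V\times\{0,1\}^m\to V\times\{0,1\}^m$ whose only constraint refers to $E$ and the owned vertices, and that does not even mention the start vertex. None of these notions sees the preference relations.

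Given that, claim \ref{obs:strat-derived-game1} is immediate: a threshold game $g_{a_0,\rho}$ is played on the same $(V,E)$ from the same $v_0$ with Player~$1$ owning $V_{a_0}$ and Player~$2$ owning $\bigcup_{a\neq a_0}V_a$, so Player~$1$'s strategies are exactly player $a_0$'s strategies of $g$ and Player~$2$'s strategies are exactly the unions $\bigcup_{a\neq a_0}s_a$ of the strategies of the other players. Claim \ref{obs:strat-derived-game2} is the same remark applied to two future games: if $h$ and $h'$ end with the same vertex $v$, then $g^{h}$ and $g^{h'}$ are both played on $(V,E)$ from $v$ with the same cover, hence have the same histories $\mathcal{H}_{g^{h}}=\mathcal{H}_{g^{h'}}$, the same strategies, and --- since a finite-memory strategy is obtained by unrolling a (start-vertex-agnostic) strategic implementation along these histories --- the same finite-memory strategies. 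Claim \ref{obs:strat-derived-game3} is likewise read off the definitions, since $g$, each future game $g^{h}$, and all of their threshold games carry the same $(V,E)$ and the same owned-vertex sets.

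For claim \ref{obs:strat-derived-game4} I would argue as in the standard reduction. Let $s_a$ be finite-memory, induced by a strategic implementation $\sigma$ with initial memory $M_\epsilon$, and fix $hv\in\mathcal{H}$ with last vertex $v$. By claim \ref{obs:strat-derived-game3}, $\sigma$ is also a legitimate strategic implementation in $g^{hv}$; the claim is that $\sigma$ together with the relocated initial memory $M_{\sigma}(M_\epsilon,h)$ induces $s_a^{hv}$. This reduces to two computations: at the length-one history $v$ of $g^{hv}$, the move prescribed by $(\sigma,M_{\sigma}(M_\epsilon,h))$ is $\pi_1\circ\sigma\bigl(v,M_{\sigma}(M_\epsilon,h)\bigr)=s_a(hv)=s_a^{hv}(v)$; and at a longer history $vh'v'\in\mathcal{H}_{g^{hv}}$ the prescribed move is $\pi_1\circ\sigma\bigl(v',M_{\sigma}(M_{\sigma}(M_\epsilon,h),vh')\bigr)$, which has to be identified with $s_a(hvh'v')=s_a^{hv}(vh'v')$. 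The link between the two expressions is the memory identity $M_{\sigma}(M_\epsilon,hvh')=M_{\sigma}\bigl(M_{\sigma}(M_\epsilon,h),vh'\bigr)$.

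There is no real obstacle here; the only point worth more than a glance is that last memory identity, i.e.\ that driving the memory automaton through $h$ and then through $vh'$ reaches the same state as driving it through the concatenation. It follows by a one-line induction on the length of the second history from the recursion $M_{\sigma}(M_\epsilon,hv):=\pi_2\circ\sigma\bigl(v,M_{\sigma}(M_\epsilon,h)\bigr)$, but since it is exactly what makes relocating the initial memory legitimate, I would state it explicitly rather than let it pass silently.
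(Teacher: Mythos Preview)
Your proposal is correct and follows essentially the same approach as the paper: the paper only spells out claim~\ref{obs:strat-derived-game4}, and does so exactly as you do, by reusing the strategic implementation $\sigma$ with the relocated initial memory $M_{\sigma}(M_\epsilon,h)$ and checking the two cases $v$ and $vh'v'$. Your explicit statement of the memory identity $M_{\sigma}(M_\epsilon,hvh')=M_{\sigma}\bigl(M_{\sigma}(M_\epsilon,h),vh'\bigr)$ is a mild improvement over the paper, which uses it in the final equality without comment.
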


We will employ some additional restrictions on preferences: a preference relation $\prec \subseteq [\mathcal{H}] \times [\mathcal{H}]$ is called \emph{prefix-linear}, if $\rho \prec \rho' \Leftrightarrow h \hat{\,}\rho \prec h \hat{\,} \rho'$ for all $\rho , \rho', h \hat{\,}\rho \in [\mathcal{H}]$. It is \emph{prefix-independent}, if $\rho \prec \rho' \Leftrightarrow h  \hat{\,}\rho \prec \rho'$ and $\rho' \prec \rho \Leftrightarrow \rho' \prec h \hat{\,}\rho $ for all $\rho , \rho', h \hat{\,}\rho \in [\mathcal{H}]$. Clearly, a prefix-independent preference is prefix-linear.

As a further generalization, we will consider \emph{automatic-piecewise prefix-linear} preferences $\prec$. Here, there is an equivalence relation on $\mathcal{H}$ with equivalence classes (pieces for short) in $\overline{\mathcal{H}}$ and satisfying three constraints: First, the histories in the same piece end with the same vertex. Second, there exists a deterministic finite automaton, without accepting states, that reads histories and  such that two histories are equivalent iff reading them leads to the same states. Third, for all $h \hat{\,}\rho , h \hat{\,}\rho', h' \hat{\,}\rho,h' \hat{\,}\rho' \in [\mathcal{H}]$, if $ \overline{h'} = \overline{h}\in \overline{\mathcal{H}}$, then $h \hat{\,}\rho \prec h  \hat{\,}\rho' \Leftrightarrow h'  \hat{\,}\rho \prec h'  \hat{\,}\rho'$.

\begin{definition}
A preference relation $\prec$ is automatic-piecewise Mont if there is an equivalence relation on $\mathcal{H}$ that is decidable by a finite automaton (with the first two constraints as for automatic-piecewise prefix linearity) such that the following holds. For every run $h_0 \hat{\,}\rho\in [\mathcal{H}]$ that is regular (as a singleton language) and for every family $(h_n)_{n\in\mathbb{N}}$ of paths in $(V,E)$ such that $h_0 \hat{\,}h_1 \hat{\,}\dots  \hat{\,}h_n \in \overline{h_0}$ for all $n$, and such that $h_0 \hat{\,}\dots  \hat{\,}h_n \hat{\,}\rho \in [\mathcal{H}]$ for all $n$, if $h_0 \hat{\,}\dots  \hat{\,}h_n \hat{\,}\rho \prec h_0 \hat{\,}\dots  \hat{\,}h_{n+1} \hat{\,}\rho$ for all $n$ then $h_0\hat{\,}\rho \prec h_0 \hat{\,}h_1 \hat{\,}h_2 \hat{\,}h_3\dots$.

For a cycle $h$ starting and ending in some $v \in V$, let $h^{[0]} = v$ and $h^{[n+1]} = h^{[n]} \hat{\,} h$, and finally $h^{[\omega]} = \lim_{n \to \infty} h^{[n]}$. We call $\prec$ automatic-piecewise regular-Mont, if for any regular $h_0 \hat{\,} \rho \in [\mathcal{H}]$ and cycle $h$ in $(V,E)$, if $\forall n \in \mathbb{N} \quad h_0\hat{\,} h^{[n]} \in \overline{h_0}$ and $\forall n \in \mathbb{N} \quad h_0 \hat{\,} h^{[n]} \hat{\,} \rho \prec h_0 \hat{\,} h^{[n+1]} \hat{\,} \rho$, then $h_0 \hat{\,} \rho \prec h_0 \hat{\,} h^{[\omega]}$.
\end{definition}

\begin{definition}[Strict weak order]
 \label{def:strictweakorder}
Recall that a relation $\prec$ is called a \emph{strict weak order} if it satisfies:
\[\begin{array}{l@{\quad}l}
\forall x,\quad \neg(x\prec x)  \\
\forall x,y,z, \quad x\prec y \,\wedge\, y\prec z\,\Rightarrow\, x\prec z  \\
\forall x,y,z, \quad \neg(x\prec y) \,\wedge\, \neg(y\prec z)\,\Rightarrow\, \neg(x\prec z)
\end{array}\]
\end{definition}

Strict weak orders capture in particular the situation where each player cares only about a particular aspect of the run (e.g.~her associated personal payoff), and is indifferent between runs that coincide in this aspect but not others (e.g.~the runs with identical associated payoffs for her, but different payoffs for the other players).

\section{The transfer theorem}
\label{sec:main}

We will start this section with the statement of our first main result, showing how to transfer finite-memory determinacy from class of two-player win/lose games to the corresponding multi-player multi-outcome version. We informally sketch the proof. This is followed by the technical definitions and lemmata used in the formal proof, and then the proof itself. The section is completed by a discussion of the relevance and a comparison to prior results.

\begin{theorem}
\label{theo:maingraph}
Consider a game played by a set of players $A$ on a finite graph such that
\begin{enumerate}
\item\label{cond:the-graph-0} The $\prec_a$ are automatic-piecewise prefix-linear Mont strict weak orders with $k$ pieces.
\item\label{cond:the-graph-2} All one-vs-all threshold games of all future games are determined via strategies using $m$ bits of memory.
\end{enumerate}
Then the game has a Nash equilibrium in finite-memory strategies requiring $|A|(m + 2\log k) + 1$ bits of memory.
\end{theorem}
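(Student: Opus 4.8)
The plan is to build a Nash equilibrium of the classical ``the first deviator is punished forever'' shape. Fix a suitable \emph{reference run} $\rho^\ast$; each player replays $\rho^\ast$ as long as nobody has left it, and as soon as some player $a_0$ is the first to deviate --- necessarily at a vertex of $V_{a_0}$, so the deviator is identified for free by the arena --- the remaining players switch, in the future game after that deviating move, to a joint strategy winning the one-vs-all threshold game against $a_0$ for a threshold read off $\rho^\ast$ at the piece where the deviation took place. By Condition~\ref{cond:the-graph-2} that threshold game is determined with $m$ bits of memory: either $a_0$ can force an outcome strictly $\succ_{a_0}$ the threshold, or the coalition $A\setminus\{a_0\}$ can force one that is not; this is the only point where determinacy is invoked. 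By Observation~\ref{obs:strat-derived-game} the relevant strategies and their $m$-bit strategic implementations transfer between a game, its future games and their threshold games, so the whole construction runs inside one fixed product arena whose states are a vertex together with, for each player, the state of the automaton witnessing that player's piece equivalence; storing such a state costs $|A|\log k$ bits, and a single $m$-bit implementation of each coalition strategy serves all the future games, only the initial memory varying with the piece.

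The heart of the matter is to choose $\rho^\ast$ so that along it no player can ever deviate profitably against the punishing coalition. I would obtain $\rho^\ast$ by a carefully ordered greedy improvement process: start from an arbitrary lasso run; whenever the current candidate $\rho$ admits, at some prefix $hv$ with $v\in V_{a_0}$, an alternative move after which $a_0$ has (in the ``$a_0$ wins'' branch of determinacy) a finite-memory winning strategy in the induced threshold game of the future game after that move, replace $\rho$ by a lasso outcome of that strategy against finite-memory coalition play, which by automatic-piecewise prefix-linearity across the common prefix $hv$ is $\succ_{a_0}\rho$. Since the preferences are automatic-piecewise with only $k$ pieces and the graph is finite, one can keep every intermediate run a lasso inside the product arena and arrange that the improvements still to be performed ``in the limit'' reduce to inserting ever more copies of a single fixed cycle; the Mont (resp.\ regular-Mont) condition is then exactly what certifies that repeating that cycle forever still yields an outcome $\succeq_{a_0}\rho$, so that the limiting run $\rho^\ast$ is a legitimate, actually attainable outcome rather than an unreachable supremum. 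The result is a lasso $\rho^\ast$ whose period is bounded by the size of the product arena, together with, for each player $a_0$ and each piece, an $m$-memory coalition strategy which --- from the first move of any deviation of $a_0$ off $\rho^\ast$ --- forces the run to remain not $\succ_{a_0}\rho^\ast$. Making this greedy process terminate (exploiting the $k$ pieces and the finite graph) while still producing a self-consistent equilibrium outcome (exploiting Mont to pass to the limit of the improvement chain), and keeping the piece-dependence of the thresholds straight throughout, is the step I expect to be the main obstacle.

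It then remains to assemble the strategy profile and count memory. Each player $a$ keeps one mode bit (has a deviation already occurred?) and, for every player $a_0$, a block of $m + 2\log k$ bits: $m$ bits for the memory of $a$'s component of the $a_0$-punishing coalition strategy, $\log k$ bits tracking the current state of $a_0$'s piece automaton, and $\log k$ bits for the frozen threshold piece against $a_0$. While the mode bit is $0$, all players replay $\rho^\ast$ --- its period fits jointly into the $|A|$ current-piece registers --- and keep updating the hypothetical thresholds; at the first deviation the owner of the current vertex is the deviator $a_0$, each other player freezes its $a_0$-threshold register and from then on plays from its $m$-bit $a_0$-punishment block with the initial memory dictated by the frozen piece. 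If nobody deviates the outcome is $\rho^\ast$; if $a_0$ deviates unilaterally, it is punished from its first deviating move on, so by the threshold-game winning property, automatic-piecewise prefix-linearity within the frozen piece (to compare $\rho^\ast$ and the punished run across their common prefix), and Mont (for the limit behaviour of the coalition strategy along a cycle), $a_0$'s outcome is not $\succ_{a_0}\rho^\ast$. Hence the profile is a Nash equilibrium, using $|A|(m+2\log k)+1$ bits. Finally, every ingredient --- the determinacy strategies (given by hypothesis), the improvement process, and the bookkeeping --- is effective, so the construction also furnishes the announced algorithm.
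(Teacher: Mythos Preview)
Your high-level architecture --- fix a reference run $\rho^\ast$, have everybody replay it, and punish the first deviator via the coalition side of a suitably chosen threshold game --- coincides with the paper's. Where you diverge is in how $\rho^\ast$ is manufactured, and this is exactly the point where your proposal has a genuine gap.

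The paper does \emph{not} search for $\rho^\ast$ directly. Instead it introduces, for each player $a$ separately, the \emph{best future guarantee} $\Gamma_a(h)$ (Definition~\ref{defn:ag-IAC}) and builds, via Algorithm~\ref{algo:g-strat} in Lemma~\ref{lem:max-guarant1-fm}, a single $(m+2\log k)$-bit strategy $s_a$ with $\gamma_a(h,s_a^h)=\Gamma_a(h)$ for every $h$. The Mont condition is used \emph{here}, and only here: it handles the case where Algorithm~\ref{algo:g-strat} switches local strategies infinitely often along a run, showing that the limiting run still lies in $\Gamma_a(h_0)$. Crucially this argument is carried out for one fixed player $a$; the switches are all $\prec_a$-improvements, so Mont applies cleanly. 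The equilibrium run $\rho_{NE}$ is then simply the play of the profile $(s_a)_{a\in A}$: since $s_a$ realises $a$'s guarantee against \emph{any} opponents, the tail of $\rho_{NE}$ after any $h$ automatically lies in $\Gamma_a(h)$ for every $a$ simultaneously. No inter-player fixed-point or improvement argument is needed.

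Your greedy process, by contrast, interleaves improvements for different players: replacing $\rho$ by something $\succ_{a_0}\rho$ may well make it $\prec_{a_1}$-worse, reopening an improvement for $a_1$, and so on. Nothing in the hypotheses rules out that this alternation cycles forever through finitely many lassos, and Mont is of no help because it speaks about a chain of $\prec_a$-improvements for a \emph{single} $a$, not a sequence where the improving player changes from step to step. Your own remark that making the process terminate ``is the step I expect to be the main obstacle'' is exactly right; the paper's per-player guarantee construction is precisely the device that dissolves this obstacle. A secondary consequence is your memory accounting: in the paper the $|A|(m+2\log k)$ bits arise because each player must simulate \emph{all} players' guarantee strategies $s_b$ in parallel to detect the first deviation from $\rho_{NE}$ --- not, as in your breakdown, from pre-loading $|A|$ punishment implementations. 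Once you adopt the guarantee route, the bound and its justification fall out as in the paper's final paragraph.
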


Definition~\ref{defn:ag-IAC} below rephrases Definitions 2.3 and 2.5 from~\cite{leroux3}: The guarantee of a player is the smallest set of runs that is upper-closed w.r.t. the strict-weak-order preference of the player and includes every incomparability class (of the preference) that contains any run compatible with a given strategy of the player in the subgame at any given finite history of the game. The best guarantee of a player consists of the intersection of all her guarantees over the set of strategies.

More plainly spoken, the best guarantee for a player at some history is the set of runs such that the player cannot unilaterally enforce something better (for him). We will then show that each player has indeed a strategy enforcing her guarantee. Note that the notion of best guarantee for a player does not at all depend on the preferences of the other players; and as such, it is rather strenuous to consider such runs to be optimal in some sense (cf.~Example \ref{example:guarantee}). However, we can construct a Nash equilibrium by starting with a strategy profile where everyone is realizing their guarantee, and then adding punishments against any deviators.

\begin{definition}[Player (best) future guarantee]\label{defn:ag-IAC}
Let $g$ be the game\\$\langle (V,E), v_0, A, \{V_a\}_{a \in A}, (\prec_a)_{a \in A}\rangle$ where $\prec_a$ is a strict weak order for some $a\in A$. For all $h\in \mathcal{H}$ and strategies $s_a$ for $a$ in $g^{h}$ let $\gamma_{a}(h,s_a) := \{\rho\in [\mathcal{H}_{g^h}] \,\mid\,\exists \rho'\in [\mathcal{H}_{g^h}(s_a)],\, \neg (\rho \prec_a^h \rho')\}$ be the player future guarantee by $s_a$ in $g^{h}$. Let $\Gamma_{a}(h):=\bigcap_{s_a}\gamma_a( h,s_a)$ be the best future guarantee of $a$ in $g^h$.
\end{definition}

\begin{example}
\label{example:guarantee}
Let the underlying graph be as in Figure~\ref{fig:fig1a}, where circle vertices are controlled by Player 1 and diamond vertices are controlled by Player 2. The preference relation of Player $1$ is $(ab)^\omega \succ_1 a(ba)^nx^\omega \succ_1 (ab)^ny^\omega$ and the preference relation of Player $2$ is $(ab)^\omega \succ_2 (ab)^ny^\omega \succ_2 a(ba)^{n}x^\omega$ (in particular, both players care only about the tail of the run).

Then $\Gamma_1(a) = \{(ab)^\omega\} \cup \{a(ba)^nx^\omega \mid n \in \mathbb{N}\}$ and $\Gamma_2(a) = [\mathcal{H}]$. Player $1$ realizing her guarantee means for her to move to $x$ immediately, thus forgoing any chance of realizing the run $(ab)^\omega$. The Nash equilibrium constructed in the proof of Theorem \ref{theo:maingraph} will be Player $1$ moving to $x$ and Player $2$ moving to $y$. Note that in this particular game, the preferences of Player $2$ have no impact at all on the Nash equilibrium that will be constructed.
\end{example}

\begin{figure}
\centering
\begin{subfigure}[b]{0.5\textwidth}
\begin{tikzpicture}[shorten >=1pt,node distance=1.6cm, auto]
  \node[state, initial] (x) {a};
  \node[draw, diamond] (y)  [right of = x] {b};
 \node[state] (x2) [below of = x] {x};
   \node[draw, diamond] (y2)  [below of = y] {y};

\path[->] (x) edge [bend left] node {} (y)
                edge node {} (x2)
                (x2) edge [loop left] node {} ()
                (y) edge [bend left] node {} (x)
                edge node {} (y2)
                (y2) edge [loop right] node {} ()       ;
 \end{tikzpicture}
  \caption{}
 \label{fig:fig1a}
\end{subfigure}
 \begin{subfigure}[b]{0.25\textwidth}
\begin{tikzpicture}[shorten >=1pt,node distance=1.6cm, auto]
  \node[state, initial] (x) {b};
  \node[state] (y)  [right of = x] {g};

\path[->] (x) edge [loop above] node {1} ()
                edge [bend left] node {0} (y)
                (y) edge [loop above] node {0} ()
                edge [bend left] node {0} (x);
 \end{tikzpicture}
  \caption{}
 \label{fig:finite-unbounded-memory}
   \end{subfigure}
 \caption{}
 \label{fig:misc}
 \end{figure}

\begin{lemma}\label{lem:guarantee-inclusion}
Let $g$ be a game on a graph, let $\prec_a$ be a strict weak order preference for some player $a$, let $h\in \mathcal{H}$, let $s_a$ be a strategy for $a$ in $g^{h}$, let $h' \in \mathcal{H}(s_a)$, and let $s'_a$ be a strategy for $a$ in $g^{h\hat{\,}h'}$.
\begin{enumerate}

\item\label{lem:guarantee-inclusion2} Then $h'\hat{\,}\gamma_a(h\hat{\,}h',s_a^{h'}) \subseteq \gamma_a(h,s_a)$ for all $h'\in \mathcal{H}(s_a)$.\\

\item\label{lem:guarantee-inclusion5} If $\gamma_a(h\hat{\,}h',s'_a) \subsetneq \gamma_a(h\hat{\,}h', s_a^{h'})$, there exists $\rho \in \gamma_a(h\hat{\,}h',s_a^{h'})$ such that $\rho \prec_a^{h\hat{\,}h'} \rho'$ for all $\rho' \in \gamma_a(h\hat{\,}h',s'_a)$.

\item\label{lem:guarantee-inclusion6} If $\gamma_a(h\hat{\,}h',s'_a) \subseteq \gamma_a(h\hat{\,}h', s_a^{h'})$ then $h'\hat{\,}\gamma_a(h\hat{\,}h',s'_a) \subseteq \gamma_a(h, s_a)$.

\end{enumerate}
\begin{proof}
\begin{enumerate}

\item Let $\rho \in \gamma_a(h\hat{\,}h',s_a^{h'})$, so by Definition~\ref{defn:ag-IAC} there exists $\rho' \in [\mathcal{H}(s_a^{h'})]$ such that $\neg(\rho \prec_a^{h\hat{\,}h'} \rho')$, \textit{i.e.} $\neg(h'\hat{\,}\rho \prec_a^{h} h'\hat{\,}\rho')$. So $h'\hat{\,}\rho \in \gamma_a(h,s_a)$ since $h'\hat{\,}\rho' \in [\mathcal{H}(s_a)] \subseteq \gamma_a(h,s_a)$.

\item Let $\rho \in  \gamma_a(h\hat{\,}h', s_a^{h'}) \setminus \gamma_a(h\hat{\,}h',s'_a)$, so $\rho \prec_a^{h\hat{\,}h'}\rho'$ for all $\rho' \in \gamma_a(h\hat{\,}h',s'_a)$ by Definition~\ref{defn:ag-IAC}.

\item Let $\rho \in \gamma_a(h\hat{\,}h',s'_a)$, so $h'\hat{\,}\rho \in h'\hat{\,}\gamma_a(h\hat{\,}h', s_a^{h'})$ by assumption, so $h'\hat{\,}\rho \in \gamma_a(h, s_a)$ by Lemma~\ref{lem:guarantee-inclusion}.\ref{lem:guarantee-inclusion2}.

\end{enumerate}
\end{proof}
\end{lemma}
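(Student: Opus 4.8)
The plan is to reduce all three parts to two elementary translation facts between the future game $g^h$ and the further future game $g^{h\hat{\,}h'}$, and then to unfold Definition~\ref{defn:ag-IAC} in each case. The first fact is the \emph{preference translation}: for all $\rho,\rho' \in [\mathcal{H}_{g^{h\hat{\,}h'}}]$ we have $\rho \prec_a^{h\hat{\,}h'} \rho'$ iff $h'\hat{\,}\rho \prec_a^h h'\hat{\,}\rho'$. This is immediate by applying the defining clause $\sigma \prec_a^k \sigma' \Leftrightarrow k\hat{\,}\sigma \prec_a k\hat{\,}\sigma'$ of the future-game preference once with $k = h\hat{\,}h'$ and once with $k = h$, using associativity of the concatenation $\hat{\,}$. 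The second fact is the \emph{compatibility translation}: since $h' \in \mathcal{H}(s_a)$ and $s_a^{h'}$ is by definition the restriction of $s_a$ after $h'$, prepending $h'$ sends runs compatible with $s_a^{h'}$ to runs compatible with $s_a$, i.e. $h'\hat{\,}[\mathcal{H}(s_a^{h'})] \subseteq [\mathcal{H}(s_a)]$.

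For part~\ref{lem:guarantee-inclusion2}, I would take $\rho \in \gamma_a(h\hat{\,}h', s_a^{h'})$ and produce a witness for $h'\hat{\,}\rho \in \gamma_a(h, s_a)$. By Definition~\ref{defn:ag-IAC} there is $\rho' \in [\mathcal{H}(s_a^{h'})]$ with $\neg(\rho \prec_a^{h\hat{\,}h'} \rho')$; the preference translation turns this into $\neg(h'\hat{\,}\rho \prec_a^h h'\hat{\,}\rho')$, and the compatibility translation gives $h'\hat{\,}\rho' \in [\mathcal{H}(s_a)]$. Thus $h'\hat{\,}\rho'$ is exactly the witness required by Definition~\ref{defn:ag-IAC}, so $h'\hat{\,}\rho \in \gamma_a(h, s_a)$. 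Part~\ref{lem:guarantee-inclusion6} then follows with no further work: under the hypothesis $\gamma_a(h\hat{\,}h', s'_a) \subseteq \gamma_a(h\hat{\,}h', s_a^{h'})$, any $\rho \in \gamma_a(h\hat{\,}h', s'_a)$ already lies in $\gamma_a(h\hat{\,}h', s_a^{h'})$, and part~\ref{lem:guarantee-inclusion2} sends $h'\hat{\,}\rho$ into $\gamma_a(h, s_a)$.

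The one step that needs genuine care is part~\ref{lem:guarantee-inclusion5}, where the strict-weak-order axioms of Definition~\ref{def:strictweakorder} do the work. Proper inclusion $\gamma_a(h\hat{\,}h', s'_a) \subsetneq \gamma_a(h\hat{\,}h', s_a^{h'})$ supplies some $\rho \in \gamma_a(h\hat{\,}h', s_a^{h'}) \setminus \gamma_a(h\hat{\,}h', s'_a)$. I would fix an arbitrary $\rho' \in \gamma_a(h\hat{\,}h', s'_a)$ and argue $\rho \prec_a^{h\hat{\,}h'} \rho'$. Unfolding $\rho' \in \gamma_a(h\hat{\,}h', s'_a)$ yields a witness $\sigma' \in [\mathcal{H}(s'_a)]$ with $\neg(\rho' \prec_a^{h\hat{\,}h'} \sigma')$, while $\rho \notin \gamma_a(h\hat{\,}h', s'_a)$ means $\rho \prec_a^{h\hat{\,}h'} \sigma''$ for every $\sigma'' \in [\mathcal{H}(s'_a)]$, in particular $\rho \prec_a^{h\hat{\,}h'} \sigma'$.

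The finish uses the negative-transitivity axiom $\neg(x\prec y)\wedge\neg(y\prec z)\Rightarrow\neg(x\prec z)$: applied to $x=\rho$, $y=\rho'$, $z=\sigma'$, the assumption $\neg(\rho \prec_a^{h\hat{\,}h'} \rho')$ together with the established $\neg(\rho' \prec_a^{h\hat{\,}h'} \sigma')$ would force $\neg(\rho \prec_a^{h\hat{\,}h'} \sigma')$, contradicting $\rho \prec_a^{h\hat{\,}h'} \sigma'$; hence $\rho \prec_a^{h\hat{\,}h'} \rho'$. The expected obstacle is precisely this: the strict inequality $\rho \prec_a^{h\hat{\,}h'} \rho'$ does not follow from ordinary transitivity, since $\sigma'$ is only \emph{weakly} below $\rho'$, and the point is that strict weak orders are exactly the setting in which such a comparison can be recovered. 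Everything else is routine unfolding of Definition~\ref{defn:ag-IAC} through the two translation facts.
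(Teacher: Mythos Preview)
Your argument is correct and follows the same route as the paper: parts~\ref{lem:guarantee-inclusion2} and~\ref{lem:guarantee-inclusion6} are handled identically, and for part~\ref{lem:guarantee-inclusion5} you in fact supply more detail than the paper, which jumps straight from $\rho \notin \gamma_a(h\hat{\,}h',s'_a)$ to $\rho \prec_a^{h\hat{\,}h'} \rho'$ for all $\rho' \in \gamma_a(h\hat{\,}h',s'_a)$ ``by Definition~\ref{defn:ag-IAC}'' without spelling out the negative-transitivity step you correctly isolate.
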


\begin{lemma}
\label{lem:guarantee-ufm}
Let $g$ be a game on a finite graph with strict weak order $\prec_a$ for some $a\in A$, and let $m\in\mathbb{N}$. In each of the threshold games for $a$ of the future games let us assume that

\begin{enumerate}
\item\label{lem:guarantee-ufm1} if Player $1$ has a winning strategy, she has one with memory size $m$. Then for all $h \in \mathcal{H}$ there exists a strategy $s_a$ with memory size $m$ such that $\gamma_a( h, s_a) = \Gamma_a( h)$.

\item\label{lem:guarantee-ufm2} one of the players has a winning strategy with memory size $m$. Then $\Gamma_a( h)$ has a regular $\prec_a^h$-minimum for all $h \in \mathcal{H}$. %Taking this minimum as a threshold Player $2$ has a winning strategy for the threshold game for $a$ in $g^h$.
\end{enumerate}
\begin{proof}
\begin{enumerate}
\item For all $\rho \in [\mathcal{H}_{g^h}]$, we have $\rho \notin \Gamma_a(h)$ iff Player $1$ has a winning strategy in the threshold game for $a$ and $\rho$ in $g^h$. In this case let $s_a^p$ be a winning strategy with memory size $m$. For a game with $n$ vertices there are at most $(n2^{m})^{(n2^m)}$ strategy profiles using $m$ bits of memory (by the $\sigma$ representation), so the $s_a^p$ are finitely many, so at least one of them, which we name $s_a$, wins the threshold games for all $\rho \notin \Gamma_a(h)$. This shows that $\gamma_a( h, s_a) \subseteq \Gamma_a( h)$, so equality holds.

\item Towards a contradiction let us assume that $\Gamma_a(h)$ has a no $\prec_a^h$-least element, and let $\rho \in \Gamma_a(h)$, so there exists a $\prec_a^h$-smaller $\rho'\in \Gamma_a(h)$. Since $a$ has no winning strategy for the threshold game for $\rho'$ (of the future game at $h$), the determinacy assumption implies that the coalition of her opponents has one with memory size $m$. These strategies are finitely many, so one of them, $s_{-a}$, wins the threshold game for all $\rho \in \Gamma_a(h)$. So $\mathcal{H}_{g^h}(s_{-a}) \cap \Gamma_a(h) = \emptyset$, which contradicts Lemma~\ref{lem:guarantee-ufm}.\ref{lem:guarantee-ufm1}. Furthermore, the run induced by $s_{-a}$ and one finite-memory $s_a$ from Lemma~\ref{lem:guarantee-ufm}.\ref{lem:guarantee-ufm1} is regular.
\end{enumerate}
\end{proof}
\end{lemma}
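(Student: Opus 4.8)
The plan is to base both parts on a single dictionary between the best guarantee and the threshold games, together with a combination lemma that replaces finitely many finite-memory strategies by one. Unfolding Definition~\ref{defn:ag-IAC}, a run $\rho\in[\mathcal{H}_{g^h}]$ fails to lie in $\Gamma_a(h)$ exactly when some strategy $s_a$ forces $\rho\prec_a^h\rho'$ for every $\rho'\in[\mathcal{H}_{g^h}(s_a)]$, i.e.\ when $s_a$ is a winning strategy for Player~$1$ in the future threshold game $g^h_{a,\rho}$ with winning set $\{\rho'\mid\rho\prec_a^h\rho'\}$. Writing $W_s:=[\mathcal{H}_{g^h}]\setminus\gamma_a(h,s)=\{\rho\mid\forall\rho'\in[\mathcal{H}_{g^h}(s)],\ \rho\prec_a^h\rho'\}$, this reads $[\mathcal{H}_{g^h}]\setminus\Gamma_a(h)=\bigcup_{s}W_s$, the union ranging over all strategies of $a$.

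The crux, and the step I expect to be the main obstacle, is that the family $\{W_s\}_s$ is totally ordered by inclusion; this is where being a strict weak order (Definition~\ref{def:strictweakorder}) is indispensable. Suppose $W_s\not\subseteq W_{s'}$ and fix $\rho_0\in W_s\setminus W_{s'}$; then some $\tau_0\in[\mathcal{H}_{g^h}(s')]$ satisfies $\neg(\rho_0\prec_a^h\tau_0)$. For an arbitrary $\eta\in W_{s'}$ one has $\eta\prec_a^h\tau_0$, so, since $\neg(\rho_0\prec_a^h\tau_0)$, the third axiom (in contrapositive form) applied to $\eta,\rho_0,\tau_0$ yields $\eta\prec_a^h\rho_0$, and transitivity with $\rho_0\prec_a^h\sigma$ (any $\sigma\in[\mathcal{H}_{g^h}(s)]$) gives $\eta\prec_a^h\sigma$; hence $\eta\in W_s$ and $W_{s'}\subseteq W_s$. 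Now assume the hypothesis of part~\ref{lem:guarantee-ufm1}. Each $\rho\notin\Gamma_a(h)$ is won by $a$ in $g^h_{a,\rho}$, hence by a memory-$m$ strategy, so $[\mathcal{H}_{g^h}]\setminus\Gamma_a(h)$ is already the union of the $W_s$ over memory-$m$ strategies $s$ alone; there are at most $(n2^m)^{n2^m}$ of these by the $\sigma$-representation, and being finite and totally ordered by inclusion their $W_s$ have a largest member $W_{s_a}$. Thus $W_{s_a}=[\mathcal{H}_{g^h}]\setminus\Gamma_a(h)$, i.e.\ $\gamma_a(h,s_a)=\Gamma_a(h)$ for a memory-$m$ strategy $s_a$, proving part~\ref{lem:guarantee-ufm1}.

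For part~\ref{lem:guarantee-ufm2} note that its determinacy hypothesis subsumes that of part~\ref{lem:guarantee-ufm1}, so the strategy $s_a$ above is available and, since $\Gamma_a(h)=\gamma_a(h,s_a)\supseteq[\mathcal{H}_{g^h}(s_a)]\neq\emptyset$, the set is nonempty. I would show by contradiction that $\Gamma_a(h)$ has a $\prec_a^h$-least class. If it had none, every $\rho\in\Gamma_a(h)$ would admit a strictly smaller $\rho'\in\Gamma_a(h)$; since $\rho'\in\Gamma_a(h)$, Player~$1$ does not win $g^h_{a,\rho'}$, so by determinacy the opposing coalition wins it with memory $m$, forcing each compatible run $\sigma$ to satisfy $\sigma\preceq_a^h\rho'\prec_a^h\rho$, hence $\sigma\prec_a^h\rho$. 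The combination lemma, applied in its symmetric form to the up-sets $U_t:=\{\rho\mid\forall\sigma\in[\mathcal{H}_{g^h}(t)],\ \sigma\prec_a^h\rho\}$ (again totally ordered by inclusion, and finitely many among memory-$m$ coalition strategies), collapses these witnesses into a single memory-$m$ coalition strategy $s_{-a}$ with $\Gamma_a(h)\subseteq U_{s_{-a}}$. Then $[\mathcal{H}_{g^h}(s_{-a})]\cap\Gamma_a(h)=\emptyset$, since a common run would be strictly $\prec_a^h$-below itself; but the unique run of $s_a$ against $s_{-a}$ lies in $[\mathcal{H}_{g^h}(s_a)]\subseteq\gamma_a(h,s_a)=\Gamma_a(h)$ and in $[\mathcal{H}_{g^h}(s_{-a})]$, a contradiction. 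Hence a least class exists; to exhibit a regular witness I would fix a run $\rho_{\min}$ in it, take a memory-$m$ coalition strategy winning $g^h_{a,\rho_{\min}}$, and let $\rho^\ast$ be its run against $s_a$. Being induced by two finite-memory strategies on a finite graph, $\rho^\ast$ is eventually periodic, hence regular; it lies in $\gamma_a(h,s_a)=\Gamma_a(h)$ as above, and $\rho^\ast\preceq_a^h\rho_{\min}$ forces it into the least class, so $\rho^\ast$ is the sought regular $\prec_a^h$-minimum of $\Gamma_a(h)$.
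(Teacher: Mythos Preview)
Your proof is correct and follows essentially the same approach as the paper's: both characterise $\Gamma_a(h)$ via winnability of the threshold games, exploit the finiteness of memory-$m$ strategies to collapse to a single one, and for part~\ref{lem:guarantee-ufm2} derive a contradiction from a coalition strategy whose compatible runs miss $\Gamma_a(h)$. Your version is in fact more explicit where the paper is terse: the paper simply asserts that among the finitely many memory-$m$ strategies ``at least one of them\ldots wins the threshold games for all $\rho\notin\Gamma_a(h)$'', whereas you justify this step by proving that the sets $W_s$ form a chain (which is exactly where the strict-weak-order axioms enter), and your construction of the regular witness via a fresh coalition strategy for $\rho_{\min}$ is cleaner than the paper's reuse of the strategy $s_{-a}$ obtained inside the proof by contradiction.
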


\begin{lemma}\label{lem:ghf}
Let $g$ be a game on a finite graph with automatic piecewise prefix-linear strict weak order $\prec_a$ for some $a\in A$. If $h,h'\in H \in \overline{\mathcal{H}}$, then $\gamma_a(h,s_a) = \gamma_a(h',s_a)$ for all strategies $s_a$ for $a$ in $g^h$, and $\Gamma_{a}(h) = \Gamma_{a}(h')$.
\end{lemma}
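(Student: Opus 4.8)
The plan is to unwind the definitions and reduce everything to the third constraint in the definition of automatic-piecewise prefix-linearity. Recall that $\gamma_a(h,s_a) = \{\rho \in [\mathcal{H}_{g^h}] \mid \exists \rho' \in [\mathcal{H}_{g^h}(s_a)],\ \neg(\rho \prec_a^h \rho')\}$, and by definition $\rho \prec_a^h \rho'$ means $h\hat{\,}\rho \prec_a h\hat{\,}\rho'$. So the statement $\gamma_a(h,s_a) = \gamma_a(h',s_a)$ amounts to: for every $\rho$, there is some $\rho'$ reachable by $s_a$ with $\neg(h\hat{\,}\rho \prec_a h\hat{\,}\rho')$ if and only if there is some $\rho'$ reachable by $s_a$ with $\neg(h'\hat{\,}\rho \prec_a h'\hat{\,}\rho')$. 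First I would observe that this is well-typed: since $h$ and $h'$ lie in the same piece $H$, they end with the same vertex (first constraint), so by Observation~\ref{obs:strat-derived-game}.\ref{obs:strat-derived-game2} the games $g^h$ and $g^{h'}$ have the same strategies and the same histories, hence $[\mathcal{H}_{g^h}] = [\mathcal{H}_{g^{h'}}]$ and $[\mathcal{H}_{g^h}(s_a)] = [\mathcal{H}_{g^{h'}}(s_a)]$; so the two guarantee sets are subsets of the same space and are defined by quantifying over the same set of witnesses $\rho'$.

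Next, the key step: fix $\rho$ and a candidate witness $\rho' \in [\mathcal{H}_{g^h}(s_a)]$. I want $\neg(h\hat{\,}\rho \prec_a h\hat{\,}\rho') \Leftrightarrow \neg(h'\hat{\,}\rho \prec_a h'\hat{\,}\rho')$, equivalently $h\hat{\,}\rho \prec_a h\hat{\,}\rho' \Leftrightarrow h'\hat{\,}\rho \prec_a h'\hat{\,}\rho'$. But this is exactly the third constraint of automatic-piecewise prefix-linearity applied with the four runs $h\hat{\,}\rho, h\hat{\,}\rho', h'\hat{\,}\rho, h'\hat{\,}\rho'$, using $\overline{h'} = \overline{h}$. (One should note that prefix-linearity itself — as opposed to the "piecewise" version — is not needed here; only the third, piecewise-compatibility, constraint is used, and the "Mont" property plays no role either.) Since this equivalence holds for each fixed $\rho'$, and the set of admissible $\rho'$ is the same for both $h$ and $h'$, taking the existential quantifier over $\rho'$ preserves the equivalence, giving $\rho \in \gamma_a(h,s_a) \Leftrightarrow \rho \in \gamma_a(h',s_a)$. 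As $\rho$ was arbitrary, $\gamma_a(h,s_a) = \gamma_a(h',s_a)$.

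Finally, for the statement about best guarantees: $\Gamma_a(h) = \bigcap_{s_a} \gamma_a(h,s_a)$ where $s_a$ ranges over strategies of $a$ in $g^h$. Since $g^h$ and $g^{h'}$ have the very same strategies (again Observation~\ref{obs:strat-derived-game}.\ref{obs:strat-derived-game2}), and since we have just shown $\gamma_a(h,s_a) = \gamma_a(h',s_a)$ for every such $s_a$, the two intersections are over the same index set with termwise-equal terms, so $\Gamma_a(h) = \Gamma_a(h')$.

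I do not anticipate a serious obstacle here; the only thing to be careful about is the bookkeeping of which space the various $\rho,\rho'$ live in and making sure the "same strategies / same histories" transfer (Observation~\ref{obs:strat-derived-game}) is invoked correctly so that the quantifier ranges genuinely coincide. The mild subtlety worth flagging explicitly is that the hypothesis of the third constraint requires all four combined runs $h\hat{\,}\rho, h\hat{\,}\rho', h'\hat{\,}\rho, h'\hat{\,}\rho'$ to be genuine elements of $[\mathcal{H}]$, which is guaranteed precisely because $h,h'$ end with the same vertex and $\rho,\rho' \in [\mathcal{H}_{g^h}] = [\mathcal{H}_{g^{h'}}]$.
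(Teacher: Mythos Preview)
Your proof is correct and follows the same approach as the paper: the core observation is that automatic-piecewise prefix-linearity yields $\prec_a^{h} = \prec_a^{h'}$ whenever $\overline{h} = \overline{h'}$, from which both equalities follow immediately. The paper's proof is terser (it simply records the chain $\rho \prec_a^{h}\rho' \Leftrightarrow h\hat{\,}\rho \prec_a h\hat{\,}\rho' \Leftrightarrow h'\hat{\,}\rho \prec_a h'\hat{\,}\rho' \Leftrightarrow \rho \prec_a^{h'}\rho'$), whereas you spell out the bookkeeping about coinciding strategy and history spaces and the passage from $\gamma_a$ to $\Gamma_a$ more explicitly, but the substance is identical.
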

\begin{proof}
By definition of the future games and automatic-piecewise prefix-linearity $\rho \prec_a^{h}\rho'$ iff $h\hat{\,}\rho \prec_a h\hat{\,}\rho'$ iff $h'\hat{\,}\rho \prec_a h'\hat{\,}\rho'$ iff $\rho \prec_a^{h'}\rho'$, so $\prec_a^{h} = \prec_a^{h'}$.
\end{proof}

\begin{lemma}\label{lem:opt-strat}
Let $g$ be a game on a finite graph with automatic-piecewise prefix-linear strict weak order $\prec_a$ for some $a\in A$. Let $m\in\mathbb{N}$ and assume that the threshold games for $a$ of the future games are determined \textit{via} size-$m$ strategies. Let $H \in \overline{\mathcal{H}}$.

\begin{enumerate}
\item\label{lem:opt-strat1} There exists a strategy $s_a^{H}$ with memory size $m$ such that $\gamma_a(h, s_a^{H}) = \Gamma_a(h)$ for all $h\in H$.

\item\label{lem:opt-strat2} There exists a size-$m$ strategy $s_{-a}^{H}$ for Player $2$ in $g^H$ (\textit{i.e.} $g^h$ for any $h\in H$) that is winning the threshold game for $a$ and $\rho$ in $g^H$ for all $\rho \in \Gamma_a(H)$ (\textit{i.e.} $\Gamma_a(h)$ for any $h\in H$).
\end{enumerate}
\begin{proof}
\begin{enumerate}
\item Lemma~\ref{lem:guarantee-ufm}.\ref{lem:guarantee-ufm1} provides a candidate, Lemma~\ref{lem:ghf} shows that it works.

\item Let $h\in H$ and let $\rho$ be one $\prec_a^h$-minimum of $\Gamma_a(h)$ by Lemma~\ref{lem:guarantee-ufm}.\ref{lem:guarantee-ufm2}. Since Player $1$ has no winning strategy in the threshold game for $a$ and $\rho$ in $g^h$, there exists a size-$m$ strategy $s_{-a}^{H}$ that makes Player $2$ win. By Lemma~\ref{lem:ghf} this strategy works also for $g^{h'}$ for all $h' \in H$.
\end{enumerate}
\end{proof}
\end{lemma}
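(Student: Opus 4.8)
The plan is to prove both parts by applying the single-history results of Lemma~\ref{lem:guarantee-ufm} at one representative history of the piece $H$ and then propagating the conclusion across all of $H$ via Lemma~\ref{lem:ghf}. First I would observe that the determinacy hypothesis here---that every threshold game for $a$ of every future game is determined \textit{via} size-$m$ strategies---supplies at once both the hypothesis of Lemma~\ref{lem:guarantee-ufm}.\ref{lem:guarantee-ufm1} (if Player~$1$ wins, she wins with $m$ bits) and that of Lemma~\ref{lem:guarantee-ufm}.\ref{lem:guarantee-ufm2} (one of the two players wins with $m$ bits), so both are available.

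For part~\ref{lem:opt-strat1}, I would fix any representative $h_0\in H$ and let $s_a^{H}$ be the size-$m$ strategy furnished by Lemma~\ref{lem:guarantee-ufm}.\ref{lem:guarantee-ufm1} at $h_0$, so that $\gamma_a(h_0,s_a^{H}) = \Gamma_a(h_0)$. The only thing to check is that this \emph{single} strategy realises the best guarantee at \emph{every} $h\in H$, not merely at $h_0$. This is immediate from Lemma~\ref{lem:ghf}: since $h,h_0\in H$, we have $\gamma_a(h,s_a^{H}) = \gamma_a(h_0,s_a^{H})$ and $\Gamma_a(h)=\Gamma_a(h_0)$, whence $\gamma_a(h,s_a^{H})=\Gamma_a(h)$. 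Conceptually this works because $\prec_a^{h}=\prec_a^{h_0}$ throughout the piece, so the guarantee computation does not distinguish histories in $H$.

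For part~\ref{lem:opt-strat2}, I would again fix a representative $h\in H$ and invoke Lemma~\ref{lem:guarantee-ufm}.\ref{lem:guarantee-ufm2} to obtain a $\prec_a^{h}$-minimum $\rho$ of $\Gamma_a(h)$. Because $\rho\in\Gamma_a(h)$, Player~$1$ has no winning strategy in the threshold game for $a$ and $\rho$ in $g^{h}$ (recall from Definition~\ref{defn:ag-IAC} that $\rho\notin\Gamma_a(h)$ is exactly the statement that Player~$1$ wins that threshold game), so by determinacy Player~$2$ has a size-$m$ winning strategy $s_{-a}^{H}$ there. The crux---and the step I expect to be the main obstacle---is upgrading ``$s_{-a}^{H}$ wins the threshold game for the minimum $\rho$'' to ``$s_{-a}^{H}$ wins the threshold game for every $\rho'\in\Gamma_a(h)$''. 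Winning for $\rho$ means every run $\eta$ compatible with $s_{-a}^{H}$ satisfies $\neg(\rho\prec_a^{h}\eta)$; minimality of $\rho$ gives $\neg(\rho'\prec_a^{h}\rho)$ for each $\rho'\in\Gamma_a(h)$. The third axiom of strict weak orders (Definition~\ref{def:strictweakorder}), applied to $\rho',\rho,\eta$, then yields $\neg(\rho'\prec_a^{h}\eta)$, i.e.\ $\eta$ lies outside Player~$1$'s winning set for threshold $\rho'$; hence $s_{-a}^{H}$ wins for $\rho'$ too. Intuitively a smaller threshold is harder for Player~$2$, so beating the minimal threshold beats them all.

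Finally I would transport this conclusion from the representative $h$ to the whole piece. All histories in $H$ end with the same vertex, so by Observation~\ref{obs:strat-derived-game}.\ref{obs:strat-derived-game2} the games $g^{h'}$ for $h'\in H$ share strategies and $s_{-a}^{H}$ is a legitimate strategy in each of them; and Lemma~\ref{lem:ghf} gives $\prec_a^{h}=\prec_a^{h'}$ and $\Gamma_a(h)=\Gamma_a(h')$ for every $h'\in H$. Hence $s_{-a}^{H}$ wins the threshold game for $a$ and every $\rho\in\Gamma_a(H)$ in $g^{h'}$ as well. Both parts thus reduce to the interplay of the per-history Lemma~\ref{lem:guarantee-ufm} with the piece-invariance Lemma~\ref{lem:ghf}, the single genuinely new ingredient being the negative-transitivity argument in part~\ref{lem:opt-strat2}.
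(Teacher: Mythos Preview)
Your proof is correct and follows essentially the same approach as the paper's: apply Lemma~\ref{lem:guarantee-ufm} at a representative history and then propagate across the piece via Lemma~\ref{lem:ghf}. You even make explicit the one step the paper leaves tacit, namely the negative-transitivity argument showing that a Player~$2$ strategy winning the threshold game for the $\prec_a^h$-minimum $\rho$ of $\Gamma_a(h)$ automatically wins the threshold game for every $\rho'\in\Gamma_a(h)$.
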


Lemma~\ref{lem:max-guarant1-fm} below already uses all the assumptions used in Theorem~\ref{theo:maingraph}, but only for one given player.

\begin{lemma}
\label{lem:max-guarant1-fm}
Let $g$ be a game on a finite graph, and let some $\prec_a$ be an automatic-piecewise prefix-linear regular-Mont strict weak order with $k$ pieces.  Let $m\in\mathbb{N}$ and assume that the threshold games for $a$ of the future games are determined \textit{via} size-$m$ strategies. There is a strategy $s$ in $g$ such that  $\textrm{Reg} \cap \gamma_a( h, s^{h}) = \textrm{Reg} \cap \Gamma_a(h)$ for all $h \in \mathcal{H}$, and that uses $m + 2 \log k$ bits of memory, where $\textrm{Reg}$ denotes the set of all regular runs $\rho \in [\mathcal{H}]$.

If $\prec_a$ is as above, but even fulfills the Mont condition, then we can ensure $\gamma_a( h, s^{h}) = \Gamma_a(h)$ for all $h \in \mathcal{H}$.
\begin{proof}
We define a strategic implementation for $s$ in pseudocode in Algorithm~\ref{algo:g-strat}. The algorithm uses in particular that by Lemma~\ref{lem:opt-strat}.\ref{lem:opt-strat1} for any piece $\overline{h}$ there is a strategic implementation using $m$ bits for a strategy $s_a^{\overline{h}}$ such that $\gamma_a(h, s_a^{\overline{h}}) = \Gamma(h)$ for all $h\in \overline{h}$. An index of one of these strategic implementations is always stored, and the combined strategic implementation then follows the stored strategy as long as this one continues to realize the guarantee. If this is no longer the case, all of the $k$ strategic implementations are checked, and one is chosen that does realize the guarantee, and this one is followed from there onwards.

\begin{algorithm}
\SetKwProg{Fn}{Function}{ is}{end}
\SetKwFunction{KwStrategy}{Strategy}
\SetKwFunction{KwUpdatePiece}{updatePiece}
\SetKwFunction{KwUpdateLocal}{updateLocal}
\SetKwFunction{KwRealizeGua}{realizesGuarantee}
\SetKwData{Strat}{Strat}
\SetKwData{Mem}{Mem}
\SetKwData{Piece}{Piece}

\KwData{Current local strategy implementation \Strat\\ current local memory content \Mem\\ current piece \Piece}

\Fn{\KwUpdatePiece}{
\SetKwInOut{Input}{input}\SetKwInOut{Output}{output}
\Input{A piece $H$ of history and a vertex $v \in V$}

\Output{The piece of history after $H$ and then $v$}

}

\Fn{\KwUpdateLocal}{
\SetKwInOut{Input}{input}\SetKwInOut{Output}{output}
\Input{a strategy implementation s, a local memory content M, the vertex v the play is arriving in}
\Output{the updated local memory content M' and the vertex v' the strategy s wants to move to}
}

\Fn{\KwRealizeGua}{
\SetKwInOut{Input}{input}\SetKwInOut{Output}{output}
\Input{a strategy implementation s, a local memory content M, the current piece H}
\Output{a boolean answer whether (s, M) is realizing the guarantee at H}
}

\BlankLine

\Fn{\KwStrategy}{
\SetKwInOut{Input}{input}\SetKwInOut{Output}{output}
\Input{vertex v the play is arriving in}

\Output{vertex v' the strategy wants to move to}

\Piece := \KwUpdatePiece(\Piece, v)\;
(M', v') := \KwUpdateLocal(\Strat, \Mem, v)\;
\If{\KwRealizeGua(\Strat, M', \Piece)}{\Mem := M'\; \Return{v'}\;}
\Else{\ForEach{Strategy implementation s}{
(M',v') := \KwUpdateLocal(s, $0^m$, v)\;
\If{\KwRealizeGua(s, M', \Piece)}{
\Strat := s\;
\Mem := M'\;
\Return{v'}\;
}
}}
}
\caption{Strategy for player $a$ realizing guarantees}\label{algo:g-strat}
\end{algorithm}

{\bf Claim}: The strategy implemented by the Algorithm \ref{algo:g-strat} indeed satisfies the criteria.
\begin{proof}
Let $h_0 \in \mathcal{H}$. To show that $\gamma_a(h_0, s^{h_0}) = \Gamma_a(h_0)$ ($\textrm{Reg} \cap \gamma_a(h_0, s^{h_0}) = \textrm{Reg} \cap \Gamma_a(h_0)$), let $\rho \in \mathcal{H}(s^{h_0})$ ($\rho \in \textrm{Reg} \cap \mathcal{H}(s^{h_0})$) and let us make a case distinction. First case, $a$ changes strategies finitely many times along $\rho$. Let $h_1,\dots,h_n$ be such that for all $1 \leq i \leq n$ the $i$-th update along $\rho$ occurs at history $h'_i := h_0\hat{\,}h_1\hat{\,}\dots\hat{\,}h_i$. Applying Lemma~\ref{lem:guarantee-inclusion}.\ref{lem:guarantee-inclusion6} $n$ times yields $h_1\hat{\,}\dots \hat{\,}h_{n}\hat{\,}\gamma_a(h'_n,t_{h'_n}) \subseteq \dots\subseteq  h_{1}\hat{\,}\gamma_a(h'_1,t_{h'_1})\subseteq \gamma_a(h_0,t_{h_0}) \subseteq \Gamma_a(h_0)$. So $\rho \in \Gamma_a(h_0)$ since $\rho\in h_1\hat{\,}\dots\hat{\,}h_{n}\hat{\,}\gamma_a(h'_n,t_{h'_n})$.

Second case, $a$ changes strategies infinitely many times along $\rho$. Let $(h_n)_{n\geq 1}$ be the paths in $(V,E)$ such that the $n$-th change occurring strictly after $h_0$ occurs at history $h'_n :=h_0 \hat{\,} h_1\hat{\,}\dots \hat{\,}h_n$. By Lemmas~\ref{lem:guarantee-ufm}.\ref{lem:guarantee-ufm2} and \ref{lem:ghf}, for all $H\in \overline{\mathcal{H}}$ let $\rho_a^H$ be a regular $\prec_a^H$-minimum of $\Gamma_a(H)$. By Lemma~\ref{lem:guarantee-inclusion}.\ref{lem:guarantee-inclusion5}, for all $1\leq n$ there exists $\rho'\in \gamma_a(h'_{n+1}, t_{h'_n}^{h_{n+1}})$ such that $\rho' \prec_a^{h'_{n+1}} \rho''$ for all $\rho'' \in \gamma_a(h'_{n+1}, t_{h'_{n+1}})$, especially $\rho' \prec_a^{h'_{n+1}} \rho_a^{\overline{h'_{n+1}}}$. Since $h_{n+1}\rho' \in \gamma_a(h'_n,t_{h'_n})$ by Lemma~\ref{lem:guarantee-inclusion}.\ref{lem:guarantee-inclusion2}, we find $\rho_a^{\overline{h'_n}} \prec_a^{h'_n} h_{n+1} \hat{\,} \rho_a^{\overline{h'_{n+1}}}$. By finiteness of $\overline{\mathcal{H}}$ one $H\in \overline{\mathcal{H}}$ occurs infinitely many times as a $\overline{h'_n}$. For all $n \geq 1$ let $h'_{\varphi(n)}$ be the $n$-th corresponding history.

If $\rho$ is regular, there is some $f : \mathbb{N} \to \mathbb{N}$, finite path $h'$ and cycle $h$ such that $h_0 \hat{\,}h' \hat{\,} h^{[n]} = h'_{\varphi(f(n))}$ (and thus $h' \hat{\,} h^{[\omega]} = \rho$). The inequality above can then be rewritten $h_0 \hat{\,} h' \hat{\,} h^{[n]} \hat{\,} \rho_a^{\overline{h}} \prec_a h_0 \hat{\,} h' \hat{\,} h^{[n+1]} \hat{\,} \rho_a^{\overline{h}}$ for all $n\geq 1$, so $h_0 \hat{\,} h'\hat{\,}\rho_a^{\overline{h}} \prec_a h_0\hat{\,}\rho$ by the regular-Mont condition, so $h'\hat{\,}\rho_a^{\overline{h}} \prec_a^{h_0} \rho$.

In the general case, let $h = h'_{\varphi(1)}$, and let $h'$ be such that $h = h_0 \hat{\,} h'$. The inequality above can then be rewritten $h'_{\varphi(n)}\hat{\,}\rho_a^{\overline{h}} \prec_a h'_{\varphi(n+1)}\hat{\,}\rho_a^{\overline{h}}$ for all $n\geq 1$, so $h\hat{\,}\rho_a^{\overline{h}} \prec_a h_0\hat{\,}\rho$ by the Mont condition, so $h'\hat{\,}\rho_a^{\overline{h}} \prec_a^{h_0} \rho$.

Since $h'\hat{\,}\rho_a^{\overline{h}} \in \Gamma_a(h_0)$ by the finite case above, $\rho \in \Gamma_a(h_0)$ .
\end{proof}

Let us now analyze the pseudo-code in Algorithm~\ref{algo:g-strat} and find out how much memory suffices. The algorithm keeps track of the current piece $H$ of history so far, which by assumption requires $\log k$ bits. It also keeps track of an index of the current strategic implementation, which again requires $\log k$ bits (as we need at most one strategic implementation per piece). Finally, we use $m$ bits  for the memory content $M$.

 \end{proof}
\end{lemma}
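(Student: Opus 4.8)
The plan is to build $s$ as a \emph{follow-the-current-strategy, switch-on-failure} combination of the $k$ piecewise-optimal strategies supplied by Lemma~\ref{lem:opt-strat}.\ref{lem:opt-strat1}. First I would fix, for each piece $H\in\overline{\mathcal{H}}$, a size-$m$ strategic implementation of a strategy with $\gamma_a(h,s_a^{H})=\Gamma_a(h)$ for every $h\in H$. The global strategy maintains three registers: the current piece (updatable from the previous piece and the new vertex by the automaton of the second automatic-piecewise constraint, costing $\log k$ bits), the index of the local strategy currently being followed ($\log k$ bits, since one implementation per piece suffices), and the local memory content of that strategy ($m$ bits). It follows the stored local strategy as long as that strategy still realises the guarantee at the current piece, and otherwise scans all $k$ candidates and switches to one that does realise it --- such a candidate always exists by Lemma~\ref{lem:opt-strat}.\ref{lem:opt-strat1}. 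This already accounts for the announced bound of $m+2\log k$ bits.

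For correctness it suffices to establish $\gamma_a(h_0,s^{h_0})\subseteq\Gamma_a(h_0)$ for each $h_0$, the reverse inclusion being automatic since $\Gamma_a$ is an intersection over all strategies. Because $\Gamma_a(h_0)$ is upward-closed for the weak order induced by $\prec_a^{h_0}$ --- indeed, by Lemma~\ref{lem:guarantee-ufm}.\ref{lem:guarantee-ufm2} together with Lemma~\ref{lem:ghf} it is the principal up-set of a regular $\prec_a^{h_0}$-minimum $\rho_a^{\overline{h_0}}$ --- this reduces to showing that every run $\rho$ compatible with $s^{h_0}$ (respectively, in the weaker claim, every regular such $\rho$) lies in $\Gamma_a(h_0)$. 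I would split on whether $s$ switches local strategies finitely or infinitely often along $\rho$, writing $h'_i:=h_0\hat{\,}h_1\hat{\,}\dots\hat{\,}h_i$ for the successive switch points.

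In the finite case the strategy eventually settles, say after $h'_n$, on a local strategy that realises the guarantee at $h'_n$; I would then telescope Lemma~\ref{lem:guarantee-inclusion}.\ref{lem:guarantee-inclusion6} down the $n$ switches, its hypothesis holding at each switch because the freshly chosen strategy realises the guarantee and $\Gamma_a$ is contained in any guarantee. This yields $h_1\hat{\,}\dots\hat{\,}h_n\hat{\,}\gamma_a(h'_n,\cdot)\subseteq\dots\subseteq\gamma_a(h_0,\cdot)\subseteq\Gamma_a(h_0)$, and since the tail of $\rho$ places it in the leftmost set, $\rho\in\Gamma_a(h_0)$.

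The hard part will be the infinite-switching case, where the (regular-)Mont hypothesis finally enters. At each switch the continuation of the old strategy fails the guarantee while the new strategy meets it, so the strict inclusion $\gamma_a(h'_{n+1},s'_a)\subsetneq\gamma_a(h'_{n+1},t^{h_{n+1}}_{h'_n})$ holds and Lemma~\ref{lem:guarantee-inclusion}.\ref{lem:guarantee-inclusion5} applies; combined with the regular minima $\rho_a^{H}$ of $\Gamma_a(H)$, with Lemma~\ref{lem:guarantee-inclusion}.\ref{lem:guarantee-inclusion2}, and with prefix-linearity, this produces the strictly increasing chain $\rho_a^{\overline{h'_n}}\prec_a^{h'_n} h_{n+1}\hat{\,}\rho_a^{\overline{h'_{n+1}}}$. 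I would then invoke finiteness of $\overline{\mathcal{H}}$ to pigeonhole a single piece $H$ recurring infinitely often, extract the corresponding subsequence $h'_{\varphi(n)}$, and rewrite the chain with a fixed suffix $\rho_a^{\overline H}$ so that it takes exactly the shape the Mont condition consumes; in the regular subcase the recurring histories additionally assume the form $h_0\hat{\,}h'\hat{\,}h^{[n]}$ for a finite path $h'$ and a cycle $h$ with $h'\hat{\,}h^{[\omega]}=\rho$, which is precisely the pattern the regular-Mont condition requires. Passing to the limit gives $h'\hat{\,}\rho_a^{\overline H}\prec_a^{h_0}\rho$; since $h'\hat{\,}\rho_a^{\overline H}\in\Gamma_a(h_0)$ by the finite case and $\Gamma_a(h_0)$ is upward-closed, $\rho\in\Gamma_a(h_0)$, closing the argument. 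The two points I expect to demand the most care are verifying the strict-inclusion hypothesis of Lemma~\ref{lem:guarantee-inclusion}.\ref{lem:guarantee-inclusion5} at every switch, and massaging the pigeonholed chain into exactly the prefix pattern of the Mont / regular-Mont definitions, this being the sole place where the regular-only weakening of the conclusion is forced.
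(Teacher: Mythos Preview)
Your proposal is correct and follows essentially the same approach as the paper: the same follow-and-switch construction with the three registers (piece, strategy index, local memory) giving the $m+2\log k$ bound, the same finite/infinite case split on the number of switches, the same telescoping via Lemma~\ref{lem:guarantee-inclusion}.\ref{lem:guarantee-inclusion6} in the finite case, and the same chain $\rho_a^{\overline{h'_n}}\prec_a^{h'_n} h_{n+1}\hat{\,}\rho_a^{\overline{h'_{n+1}}}$ obtained from Lemmas~\ref{lem:guarantee-inclusion}.\ref{lem:guarantee-inclusion5} and~\ref{lem:guarantee-inclusion}.\ref{lem:guarantee-inclusion2} followed by pigeonholing and the (regular-)Mont condition in the infinite case. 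The two delicate points you flag are exactly where the paper's proof is terse, so your instinct to spell them out is well placed.
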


\begin{proof}[Proof of Theorem \ref{theo:maingraph}]
We combine Lemmas~\ref{lem:guarantee-ufm}.\ref{lem:guarantee-ufm1} and \ref{lem:max-guarant1-fm} to obtain a finite memory strategy $s_a$ for each player $a$ such that $\textrm{Reg} \cap \gamma_a( h, s_a^{h}) = \textrm{Reg} \cap \Gamma_a(h)$ for all $h \in \mathcal{H}$. The run $\rho_{NE}$ induced by this strategy profile will be the run induced by the Nash equilibrium. As $\rho_{NE}$ is induced by a finite-memory strategy profile, $\rho_{NE} \in \textrm{Reg}$, and in particular, for any decomposition $\rho_{NE} = h \hat{\,} \rho$ and any player we find that $\rho \in \Gamma_a(h)$. Now we need to ensure that no one has any incentive to deviate.

For all $a\in A$ and all pieces $H \in \overline{\mathcal{H}}$ of histories ending in $V_a$ let $s_{-a}^{H}$ be the strategy from Lemma~\ref{lem:opt-strat}.\ref{lem:opt-strat2}. For all $a\in A$ let $a$ play as follows: Keep playing according to $\rho_{NE}$ until one player $b$ deviates from $\rho_{NE}$ at some history $h_D$. If $b = a$, \textit{i.e.}, the last vertex of $h_D$ is in $V_a$, let $a$ do whatever; otherwise let $a$ play according to $s_{-b}^{\overline{h_D}}$ in $g^{h_D}$, \textit{i.e.} let $a$ take part in the one-vs-all coalition that ensures that $b$ cannot obtain anything $\prec_a^{h_D}$-better than $\rho_{\overline{h_D}}$ in $g^{h_D}$, so $\neg(\rho_{\overline{h_D}} \prec_b^{h_D}\rho_D)$, where $\rho_D$ is the new run in $g^{h_D}$ after deviation by $b$. Let $h_D\hat{\,}\rho = \rho_{NE}$. By construction of $\rho_{NE}$ and Lemma~\ref{lem:max-guarant1-fm}, $\rho \in \Gamma_b(h_D)$, so $\neg(\rho\prec_b^{h_D}  \rho_{\overline{h_D}} )$. So $\neg(\rho_{NE} \prec_b h_D\hat{\,}\rho_D)$ since $\prec_b$ is a strict weak order, \textit{i.e.} $b$ has no incentive to perform a deviation.

Each player $a$ needs to run the intended strategies $s_b$ for all other players $b$, too, in order to be able to detect deviation. This requires $|A|(m + 2\log k)$ bits by Lemma~\ref{lem:max-guarant1-fm}. If a deviation is detected, the punishment strategy for that history is executed instead. (Note that the history where the deviation happened includes the information on who deviated; and that the players already keep track of the history in the strategy of Lemma~\ref{lem:max-guarant1-fm}.) The punishment phase requires $\log k$ memory to record which strategy to follow, and $m$ bits to follow it. However, the original memory can be repurposed, by using just one extra bit to determine whether deviation ever occurred. This yields the overall memory bound $|A|(m + 2\log k) + 1$.
\end{proof}

\section{Discussion}\vspace{-1em}
\subsubsection*{Comparison to previous work}
As mentioned above, a similar but weaker result (compared to our Theorem \ref{theo:maingraph}) has previously been obtained by \name{Brihaye}, \name{De Pril} and \name{Schewe} \cite{depril2},\cite[Theorem 4.4.14]{depril}. They use cost functions rather than preference relations. Our setting of strict weak orders is strictly more general \footnote{For example, the lexicographic combination of two payoff functions can typically not be modeled as a payoff function, as $\mathbb{R} \times \{0,1\}$ (with lexicographic order) does not embed into $\mathbb{R}$ as a linear order.}\label{page:footnote}. However, even if both frameworks are available, it is more convenient for us to have results formulated via preference relations rather than cost functions: Cost functions can be translated immediately into preferences, whereas translating preferences to cost functions is more cumbersome. In particular, it can be unclear to what extend \emph{nice} preferences translate into \emph{nice} cost functions. Note also that prefix-linearity for strict weak orders is more general than prefix-linearity for cost functions.

As a second substantial difference, \cite[Theorem 4.4.14]{depril} requires either prefix-independent cost functions and finite memory determinacy of the induced games, or prefix-linear cost functions and positional subgame-perfect strategies. Their subgame-perfection assumption essentially means that they assume the result of Lemma \ref{lem:max-guarant1-fm}. In particular,\cite[Theorem 4.4.14]{depril} cannot be applied to energy parity games, where finite prefixes of the run do impact the overall value for the players, and where at least the protagonist requires memory to execute a winning strategy.

The Mont condition is absent in \cite{depril2}, but it can be shown that their other requirements imply a slight weakening of the Mont condition (which still suffices for our proof).

Before \cite{depril,depril2}, it had already been stated by \name{Paul} and \name{Simon} \cite{soumya} that multi-player multi-outcome Muller games have Nash equilibria consisting of finite memory strategies. As (two-player win/lose) Muller games are finite memory determined \cite{gurevich2}, and the corresponding preferences are obviously prefix independent, this result is also a consequence of \cite[Theorem 4.4.14]{depril}. Another result subsumed by \cite[Theorem 4.4.14]{depril} (and subsequently our main theorem) is found in \cite{depril3} by \name{Brihaye}, \name{Bruy\`ere} and \name{De Pril}.
\vspace{-1em}
\subsubsection*{Algorithmic considerations}
Let us briefly consider the algorithmic price to pay for the extension for the two-player win/lose case to the multi-player multi-outcome situation. Let us assume that for some class of games satisfying the criteria of Theorem \ref{theo:maingraph}, computing a winning strategy of a two-player win/lose game of size $n$ has winning strategies takes $f(n)$ time. Let us further assume that, given finite memory strategies of size up to $m$ bits for each player, we can decide in time $g(n,m)$ who is winning. Additionally, let us assume that a multi-player multi-outcome game of size $n$ induces up to $t(n)$ one-vs-all threshold games of size $n$.

We can find for each $h \in \mathcal{H}$ some strategy $s_a$ with memory size $m$ such that $\gamma_a( h, s_a) = \Gamma_a( h)$ by brute force: We are investigating up to $t(n)$ induced one-vs-all threshold games, and are asking for a winning strategy in each of them, which could take up to $t(n)(f(n) + g(n,m))$ time. In any concrete example, though, a much more efficient construction is to be expected.

In Lemma \ref{lem:max-guarant1-fm}, we need use the result above for each combination of memory content ($2^{m}$ different values) and piece of the partition ($k$). Thus, we are spending up to $2^mkt(n)(f(n) + g(n,m))$ time to obtain sufficiently many strategies to realize the guarantee everywhere, which we can combine in linear time to yield the single strategy output strategy.

In the proof of Theorem \ref{theo:maingraph}, we invoke Lemma \ref{lem:max-guarant1-fm} once per player, which costs $O(|A|2^mkt(n)(f(n) + g(n,m))$ time. In addition, we need $k$ many winning strategies in an induced one-vs-all threshold game, for additional cost of $O(kf(n))$ time -- in total, we are still at $O(|A|2^mkt(n)(f(n) + g(n,m))$.

Let us introduce some realistic but simplifying additional assumptions. We would expect $t$ to be of the form $2^{O(n)}$. The parameter $k$ will be dominated by $2^{O(n)}$ in most situations. Typical finite-memory determined win/lose games might required exponential memory (if measured by number of states), but as we measure $m$ in bits, also $2^m$ would be absorbed into $2^{O(n)}$. By dropping the distinctions between finding the winning strategy and determining who is winning, and noting that $|A| \leq n$, we arrive at an overall complexity of $2^{\mathcal{O}(n)}f(n)$. The additional factor of $2^n$ will in some cases worsen the asymptotic runtime significantly (e.g.~for parity games, subexponential algorithms are known \cite{paterson}), but in others, pales against the complexity for solving the two-player win/lose case.
\vspace{-1em}
\subsubsection*{On uniform finite memory determinacy}
The requirement in Theorem \ref{theo:maingraph} that there is a uniform memory bound sufficient for all threshold games is not dispensible. Mean-payoff parity games, for example, satisfy all other criteria, yet lack finite memory Nash equilibria, as the following example shows.

Let $g$ be the one-player game in Figure~\ref{fig:finite-unbounded-memory}. The payoff of a run that visits the vertex $g$ infinitely often is the limit (inferior or superior) of the average payoff. It is zero if $g$ is visited finitely many times only. For any threshold $t \in \mathbb{R}$, if $t < 1$, the player has a winning finite-memory strategy: cycle $p$ times in $b$, where $p > \frac{1}{1-t}$, visit $g$ once, cycle $p$ times in $b$, and so on. If $t \geq 1$, the player has no winning strategy at all. So the thresholds games of $g$, and likewise for the future game of $g$, are finite-memory determined. The game has no finite memory Nash equilibrium nonetheless, since the player can get a payoff as closed to $1$ as she wants, but not $1$.

Note that the preceding example could also be used for discounted-payoff parity games. However, these also fail the automatic-piece prefix-linearity condition.

Uniform finite memory determinacy can sometimes be recovered by considering $\varepsilon$-versions instead: We partition the payoffs in blocks of size $\varepsilon$, and let the player be indifferent within the same block. Clearly any Nash equilibrium from the $\varepsilon$-discretized version yields an $\varepsilon$-Nash equilibrium of the original game. If the original preferences were prefix-independent, the modified preferences still are. Moreover, as there are now only finitely many relevant threshold games per graph, their uniform finite memory determinacy follows from mere finite memory determinacy. In such a situation, our result allows us to conclude that multi-player multi-outcome games have finite memory $\varepsilon$-Nash equilibria.
\vspace{-1em}
\subsubsection*{On the Mont condition}
We can exhibit a prototypic example for how failure of the regular-Mont condition translates into the absence of Nash equilibria:

\begin{example}[\footnote{This example is based on an example communicated to the authors by Axel Haddad and Thomas Brihaye, which in turn is based on a construction in \cite{haddad}.}]
\label{example:mont}
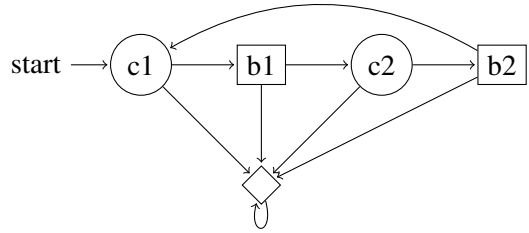
\begin{figure}
\centering
\begin{tikzpicture}[shorten >=1pt,node distance=1.6cm, auto]
  \node[draw, circle, initial] (x) {c1};
  \node[draw] (y)  [right of = x] {b1};
  \node[draw, circle] (x2)  [right of = y] {c2};
  \node[draw] (y2)  [right of = x2] {b2};
  \node[draw,diamond] (z) [below of = y] {};

\path[->] (x) edge node {} (y)
                (x) edge node {} (z)
                (y) edge node {} (x2)
                edge node {} (z)
                (x2) edge node {} (y2)
                edge node {} (z)
                (y2) edge [bend right] node {} (x)
                edge node {} (z)
                (z) edge [loop below] node {} ();
 \end{tikzpicture}
  \caption{The graph for the game in Example \ref{example:mont}}
 \label{fig:Mont-necessary}
 \end{figure}

The game $g$ in Figure~\ref{fig:Mont-necessary} involves Player $1$ ($2$) who owns the circle (box) vertices. Who owns the diamond is irrelevant. The payoff for Player $1$ ($2$) is the number of visits to a box (circle) vertex, if this number is finite, and is $-1$ otherwise. Let $s_1$ be the positional strategy where Player $1$ chooses $b1$ when in $c1$ and the diamond when in $c2$, and let $t_2$ be the positional strategy where Player $2$ always chooses the diamond. With $s_1$ Player $1$ secures payoff $1$, and with $t_2$ Player $2$ makes sure that Player $1$ is not getting more than that. Let $s_2$ be any positional strategy for Player $2$, and let $t_1$ be the positional strategy where Player $1$ always choses the diamond.  With $s_2$ Player $1$ secures payoff $1$, and with $t_1$ Player $1$ makes sure that Player $2$ is not getting more than that. Therefore the threshold games of $g$ are positionally determined, and likewise for the future games of $g$. The game $g$ has no Nash equilibrium nonetheless: in the run induced by a putative NE, one of the players has to choose the diamond at some point (to avoid payoff $-1$), but by postponing this choice to next time, the player can increase her payoff by $1$. This shows the relevance of the Mont condition.
\end{example}
\vspace{-1em}

\subsection*{On automatic-piecewise prefix-linearity}
While the definition of automatic-piecewise prefix-linearity may seem a bit complicated at first glance, the notion seems to fit in well with finite memory strategies: Intuitively, the requirement is merely that however we split a run into a finite prefix and an infinite tail, the contribution of the prefix to the value for the player factors via something expressible by a fixed (i.e.~independent of the length of the prefix) number of bits, and does so via a finite automaton. Whenever this is not satisfied, one would expect that in principle a finite memory strategy may fail (compared to an unrestricted strategy), simply because it cannot properly account for the contribution of the finite prefix it has seen so far.

Of the popular winning conditions, many are actually prefix-independent, such as parity, Muller, mean-payoff, cost-Parity, cost-Streett \cite{fijalkow2}, etc. Clearly, any combination of prefix-independent conditions itself will be prefix-independent. Typical examples of non-prefix independent conditions are reachability, energy, and discounted payoff. We can easily verify that combining a reachability or energy condition with any prefix-linear condition yields an automatic-piecewise prefix-linear condition (provided that energy is bounded).

Discounted payoff can be problematic, though. Here each vertex is assigned a payoff $a_v$, a discount factor $\delta \in (0,1)$ is chosen, and the value of a run $\rho = v_0v_1\ldots$ is $\sum_{i = 0}^\infty v_i\delta^i$. While discounted payoff on its own is of course prefix-linear, it does not combine well with other criteria: For example, in a generalized discounted payoff parity game the question whether we prefer a tail with a better discounted payoff but worse least priority to a tail with worse discounted payoff but better least priority may depend on the precise value of the payoff obtained in the history so far, as well as the length of the history (as later contributions to payoff count less). This is too much information for a finite automaton to remember, thus, generalized discounted payoff parity games do not satisfy the criterion for being automatic-piecewise prefix-linear.

\subsubsection*{Applications}
We shall briefly mention two classes of games covered by our main theorem, but not by the results from \cite{depril,depril2}, (bounded) energy parity games and a variant of reachability + mean-payoff games. For more details, we refer to \cite{paulyleroux4-arxiv}.

Energy games were first introduced in \cite{chakrabarti}: Two players take turns moving a token through a graph, while keeping track of the \emph{current energy level}, which will be some integer. Each move either adds or subtracts to the energy level, and if the energy level ever reaches $0$, the protagonist loses. These conditions were later combined with parity winning conditions in \cite{chatterjee4} to yield energy parity games as a model for a system specification that keeps track of gaining and spending of some resource, while simultaneously conforming to a parity specification.

In both \cite{chakrabarti} and \cite{chatterjee4} the energy levels are a priori not bounded from above. This is a problem for the applicability of Theorem \ref{theo:maingraph}, as unbounded energy parity preferences are not automatic-piecewise prefix-linear in the general case. In \cite{bouyer}, two versions of bounded energy conditions were investigated: Either any energy gained in excess of the upper bound is just lost (as in e.g.~recharging a battery), or gaining energy in excess of the bound leads to a loss of the protagonist (as in e.g.~refilling a fuel tank without automatic spill-over prevention). We are only concerned with the former. As a finite automaton can easily keep track of the energy level between $0$ and the upper bound, bounded energy parity preferences are automatic-piecewise prefix linear. We can also show that these games are uniformly finite-memory determined, and with some extra work obtain (see \cite{paulyleroux4-arxiv}):

\begin{corollary}
All multiplayer multioutcome energy parity games have Nash equilibria in finite memory strategies. Let $A$ be the set of players, $n$ the size of the graph and $W$ the largest energy delta. Further let $E$ be the maximum difference between the energy maximum and the energy minimum for some player. Then $1 + |A|nE^{|A|}\log (2nW) + (|A|^2 + |A|)\log nE$ bits of memory suffice.
\end{corollary}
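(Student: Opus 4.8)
The plan is to derive the corollary as an instance of Theorem~\ref{theo:maingraph}: fix a precise model of multiplayer multioutcome (bounded) energy parity games, check its two hypotheses, and substitute the resulting parameters $k$ and $m$ into the bound $|A|(m+2\log k)+1$, overestimating to reach the stated expression. Concretely I take a base graph $(V,E)$ with $n$ vertices carrying, for each player $a$, a priority and an integer energy delta of absolute value at most $W$, together with a finite energy window of width $E_a\le E$ with weak (spill-over) upper bound; a run is \emph{good} for $a$ iff along it $a$'s energy never leaves the bottom of the window and the least $a$-priority seen infinitely often has the prescribed parity, and $\prec_a$ ranks every run good for $a$ strictly above every run that is not, with indifference inside each class. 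This is a strict weak order with two classes, and the ``multioutcome'' aspect is simply that the $|A|$ players pass or fail their objectives independently.

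Verifying hypothesis~(\ref{cond:the-graph-0}) is the conceptually cheap part. The piece of a history is its last vertex together with the tuple of current energy levels of all players, each within its window or a dedicated ``dead'' symbol; a deterministic finite automaton with at most $n\prod_a(E_a+2)$ states computes it, because updating a bounded energy level under spill-over is a finite-state operation, so one reads off $k\le n\prod_a(E_a+2)$, whence $2|A|\log k$ is of the order of the $(|A|^2+|A|)\log nE$ summand. Prefix-linearity on a piece holds because $a$-parity is prefix-independent and, within a fixed piece, the prefix has brought $a$ to a fixed energy level without $a$ dying, so whether $h\hat{\,}\rho$ is good for $a$ depends only on the piece and on $\rho$. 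The Mont and regular-Mont conditions then come for free, and this is the one pleasant observation in the proof: if $h_0\hat{\,}h_1\hat{\,}\cdots\hat{\,}h_n$ stays in the piece of $h_0$ for every $n$, then $h_0\hat{\,}\cdots\hat{\,}h_n$ always ends at the same vertex with the same $a$-energy and with $a$ alive, and appending the fixed tail $\rho$ yields a run whose goodness for $a$ does not depend on $n$ (a finite prefix affects neither the priorities seen infinitely often nor the validity of the energy run from the fixed level on); hence $h_0\hat{\,}\cdots\hat{\,}h_n\hat{\,}\rho\prec_a h_0\hat{\,}\cdots\hat{\,}h_{n+1}\hat{\,}\rho$ can never hold and the implication is vacuously true, so in fact the stronger, $\textrm{Reg}$-free conclusion of Lemma~\ref{lem:max-guarant1-fm} is available.

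Verifying hypothesis~(\ref{cond:the-graph-2}), with an \emph{explicit and uniform} memory bound, is the main obstacle. Since $\prec_a$ is two-valued, a one-vs-all threshold game of any future game is either trivial (empty winning set, e.g.\ when $a$ is already dead or already good along the history) or, once the energy reached so far is unfolded into the state, the two-player windowed-energy-parity game for $a$. The latter is, after passing to the product of $(V,E)$ with the relevant (truncated) energy space — finite thanks to the spill-over cap, with ``energy leaves the window'' a losing sink — an ordinary parity game, hence positionally determined; a positional winning strategy on the product lifts to a finite-memory strategy on $(V,E)$, and since the product is finite of a size bounded uniformly in the data of the game and independently of the energy configuration reached so far, one reads off a single memory bound $m$ valid for every threshold game of every future game and every player. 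The work here is to get the spill-over semantics right so that the product genuinely is finite and genuinely captures the objective, to make the memory bound explicit and uniform, and to track how it feeds through Lemma~\ref{lem:max-guarant1-fm} and the punishment construction of Theorem~\ref{theo:maingraph} (where $k$ extra threshold-game strategies are needed, whose memory can be repurposed at the cost of one extra bit). Pushing these constants through — as detailed in~\cite{paulyleroux4-arxiv} — and substituting into $|A|(m+2\log k)+1$ yields the claimed bound $1+|A|nE^{|A|}\log(2nW)+(|A|^2+|A|)\log nE$; here it suffices to combine the two verified hypotheses with Theorem~\ref{theo:maingraph}.
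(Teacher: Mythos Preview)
Your proposal is correct and follows the same approach the paper sketches: verify the two hypotheses of Theorem~\ref{theo:maingraph} for bounded (spill-over) energy parity preferences---automatic-piecewise prefix-linearity via an automaton tracking the current energy levels, uniform finite-memory determinacy via the product-with-energy reduction to an ordinary parity game---and then instantiate the memory bound, with the exact constants deferred to~\cite{paulyleroux4-arxiv} just as the paper itself does. Your explicit argument that the Mont condition holds vacuously (because, for a two-valued preference, histories in the same piece followed by the same tail are always $\prec_a$-equivalent, so the strict chain in the hypothesis can never occur) is a detail the paper's one-line sketch omits but which is exactly how that verification goes.
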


In our second example, each player has both a mean-payoff goal and a reachability objective. Maximizing the mean-payoff, however, takes precedence, and the reachability objective only becomes relevant as a tie-breaker. These conditions are not expressible via some payoff or cost function\footnote{Compare the footnote on Page \ref{page:footnote}.}, but are still automatic-piecewise prefix-independent strict weak orders. As the two-player win/lose games can easily be reduced to mean-payoff games, we obtain uniform finite-memory determinacy, and thus the existence of finite-memory Nash equilibria by Theorem \ref{theo:maingraph}.

We leave the investigation whether winning conditions defined via $\mathrm{LTL}[\mathcal{F}]$ or $\mathrm{LTL}[\mathcal{D}]$ formulae \cite{kupferman2} match the criteria of Theorem \ref{theo:maingraph} to future work. Another area of prospective examples to explore are multi-dimensional objectives as studied e.g.~in \cite{raskin,raskin4}.

\bibliographystyle{eptcs}
\bibliography{../../../spieltheorie}

\end{document}